
\documentclass[oneside, reqno]{amsart}

\usepackage{amsmath, amssymb, amsthm}
\usepackage{enumitem}

\usepackage[dvipsnames]{xcolor}
\usepackage{lmodern}
\usepackage[T1]{fontenc}
\usepackage{microtype}



\setlength{\parindent}{0in}
\raggedbottom

\newcommand{\alias}[1]{\providecommand{#1}{}\renewcommand{#1}}





\DeclareMathOperator\Tr{Tr}








\newcommand{\tot}{\tfrac{1}{2}} 
\newcommand{\oo}[1]{\tfrac{1}{#1}}


\newcommand{\abs}[1]{\left| #1 \right|} 
\newcommand{\babs}[1]{\big| #1 \big|} 

\newcommand{\set}[1]{\{#1\}} 



\newcommand{\norm}[1]{{||#1||}} 
\newcommand{\Bnorm}[1]{{\Big|\Big|#1\Big|\Big|}} 

\newcommand{\ft}[2]{#1\dots#2}
\renewcommand{\ft}[2]{#1,\dots,#2}

\newcommand{\prf}[1]{ \{ #1 \}_{t\in [0,T]}}



\providecommand{\R}{} \renewcommand{\R}{{\mathbb R}}

\providecommand{\N}{} \renewcommand{\N}{{\mathbb N}}

\newcommand{\PP}{{\mathbb P}}

\newcommand{\EE}{{\mathbb E}}


\newcommand{\FFF}{{\mathbb F}}

\newcommand{\bmu}{\boldsymbol{\mu}}
\newcommand{\ba}{{\mathrm{ba}}}

\newcommand{\eps}{\varepsilon}
\newcommand{\ld}{\lambda}
\newcommand{\Ld}{\Lambda}
\newcommand{\gm}{\gamma}
\newcommand{\vp}{\varphi}



\newcommand{\el}{{\mathbb L}} 

\newcommand{\lone}{\el^1}

\newcommand{\linf}{\el^{\infty}}



\newcommand{\define}[1]{{\textbf{#1}}}







\newcounter{notecounter}


\newcommand{\efor}{\text{ for }}
\newcommand{\eforall}{\text{ for all }}

\newcommand{\eand}{\text{ and }}


\newcommand{\itos}{It\^ o's}




\newcommand\bb{{\mathbb b}}

\newcommand\tc{{\tilde{c}}}
\newcommand\hc{{\hat{c}}}

\newcommand\te{{\tilde{e}}}

\newcommand\tf{{\tilde{f}}}

\newcommand\bg{{\mathbb g}}

\newcommand\bi{{\mathbb i}}

\newcommand\tx{{\tilde{x}}}

\newcommand\sA{{\mathcal A}}

\newcommand\bA{{\mathbb A}}

\newcommand\sC{{\mathcal C}}

\newcommand\hD{{\hat{D}}}

\newcommand\sF{{\mathcal F}}

\newcommand\sH{{\mathcal H}}

\newcommand\sK{{\mathcal K}}

\newcommand\sM{{\mathcal M}}

\newcommand\bM{{\mathbb M}}

\newcommand\sP{{\mathcal P}}

\newcommand\bR{{\mathbb R}}

\newcommand\sS{{\mathcal S}}

\newcommand\bS{{\mathbb S}}
\newcommand\tS{{\tilde{S}}}

\newcommand\tV{{\tilde{V}}}

\newcommand\hX{{\hat{X}}}

\numberwithin{equation}{section}
\theoremstyle{plain}                
\newtheorem{theorem}{Theorem}[section]

\newtheorem{proposition}[theorem]{Proposition}
\newtheorem{corollary}[theorem]{Corollary}

\theoremstyle{definition}           
\newtheorem{definition}[theorem]{Definition}

\newtheorem{assumption}[theorem]{Assumption}

\theoremstyle{remark}
\newtheorem{remark}[theorem]{Remark}


\let\oldsum\sum
\renewcommand{\sum}{\textstyle\oldsum}
\let\oldprod\prod
\renewcommand{\prod}{\textstyle\oldprod}

\alias{\Bro}{B^{\rho}}
\alias{\Dl}{\varDelta}
\alias{\Gm}{\varGamma}
\alias{\Mez}{M^{Z,\eta}}
\alias{\Mez}{M^{Z,\eta}}
\alias{\Me}{M^{\eta}}
\alias{\Me}{M^{\eta}}
\alias{\Mz}{M^{Z}}
\alias{\Sin}{\sS^{\infty}}
\alias{\So}{S^1}
\alias{\Sz}{S^0}
\alias{\Wi}{W^{1,\infty}}
\alias{\Wro}{W^{\rho}}
\alias{\Xdc}{X^{\Delta,c}}
\alias{\YN}{\upN{Y}}
\alias{\YiN}{Y^{i,(N)}}
\alias{\ZN}{\upN{Z}}
\alias{\ZiN}{Z^{i,(N)}}
\alias{\aN}{\upN{a}}
\alias{\aeta}{\abs{\eta}}
\alias{\ai}{\alpha^i}
\alias{\al}{\alpha}
\alias{\asig}{\abs{\sigma}}
\alias{\bA}{\overline{A}}
\alias{\bMez}{\bar{M}^{Z,\eta}}
\alias{\bMez}{\bar{M}^{Z,\eta}}
\alias{\bMe}{\bar{M}^{\eta}}
\alias{\bMe}{\bar{M}^{\eta}}
\alias{\bMz}{\bar{M}^{Z}}
\alias{\bR}{\bar{R}}
\alias{\bSo}{\bS^1}
\alias{\bSz}{\bS^0}
\alias{\bS}{\overline{S}}
\alias{\bara}{\bar{\alpha}}
\alias{\barb}{\bar{\beta}}
\alias{\ba}{\bar{\alpha}}
\alias{\bb}{\bar{\beta}}
\alias{\bg}{\bar{g}}
\alias{\bi}{\beta^i}
\alias{\bld}{\bar{\lambda}}
\alias{\bmo}{\ensuremath{\mathrm{bmo}}}
\alias{\bmu}{\bar{\mu}}
\alias{\bsalpha}{\boldsymbol{\alpha}}
\alias{\bsbeta}{\boldsymbol{\beta}}
\alias{\bsigma}{\bar{\sigma}}
\alias{\btS}{\tilde{\bar{S}}}
\alias{\cdoze}{\cdo{\zeta},\cdo{\eta}}
\alias{\cdoz}{\cdo{\zeta}}
\alias{\cmu}{\check{\mu}}
\alias{\dl}{\delta}
\alias{\eL}{\el^{\Psi}}
\alias{\eM}{\bM^{\Phi}}
\alias{\ehnu}{\hat{\nu}^{\eps}}
\alias{\eie}{\text{ i.e., }}
\alias{\epnu}{\hat{\nu}^{\eps, \perp}}
\alias{\fNp}{f^{(N')}}
\alias{\fr}{\tfrac}
\alias{\f}{\frac}
\alias{\hD}{\hat{D}}
\alias{\halpha}{\hat{\alpha}}
\alias{\hbsc}{\hat{\boldsymbol{c}}}
\alias{\hbspi}{\hat{\boldsymbol{\pi}}}
\alias{\hc}{\hat{c}}
\alias{\heta}{\hat{\eta}}
\alias{\heta}{\hat{\eta}}
\alias{\heta}{\hat{\eta}}
\alias{\hmu}{\hat{\mu}}
\alias{\hnu}{\hat{\nu}}
\alias{\hpi}{\hat{\pi}}
\alias{\hpi}{\hat{\pi}}
\alias{\hth}{\hat{\th}}
\alias{\lob}{\lone_{\beta}}
\alias{\mfz}{\mathfrak{Z}}
\alias{\peta}{\hat{\eta}^{\perp}}
\alias{\sAN}{\sA^{N}}
\alias{\sAs}{\sA^{\mathrm{smp}}}
\alias{\sCe}{\sC^{\mathrm{ess}}}
\alias{\sCs}{\sC^{\mathrm{smp}}}
\alias{\sKs}{\sK^{\mathrm{smp}}}
\alias{\saif}{\sA^{\infty}_f}
\alias{\scif}{\sC^{\infty}_f}
\alias{\sigN}{\upN{\sigma}}
\alias{\srho}{\sqrt{1-\rho^2}}
\alias{\tDelta}{\tilde{\Delta}}
\alias{\tS}{\tilde{S}}
\alias{\tbS}{\widetilde{\overline{S}}}
\alias{\tc}{\tilde{c}}
\alias{\te}{\tilde{e}}
\alias{\tf}{\tfrac}
\alias{\tho}{\theta^1}
\alias{\th}{\theta}
\alias{\tmu}{\tilde{\mu}}
\alias{\tpi}{\tilde{\pi}}
\alias{\tso}{t^1}
\alias{\tsz}{t^0}
\alias{\tvp}{\tilde{\vp}}
\alias{\uNP}{u^{(N+1)}}
\alias{\uN}{u^{(N)}}
\alias{\ueta}{\underline{\eta}}
\alias{\usM}{\underline{\sM}}
\alias{\uzeta}{\underline{\zeta}}
\alias{\vN}{\upN{v}}
\alias{\wN}{\upN{w}}
\alias{\Urep}{U_{\text{rep}}}
\alias{\rrep}{r_{\text{rep}}}
\alias{\lrep}{\lambda_{\text{rep}}}
\newcommand{\cdo}[1]{#1_{\cdot}}

\newcommand{\ioNz}[1]{\iota_{N_0}(#1)}
\newcommand{\ioN}[1]{\iota_N(#1)}
\newcommand{\upN}[1]{#1^{(N)}}
\alias{\osig}{\overline{\sigma}}
\alias{\usig}{\underline{\sigma}}
\alias{\oC}{\overline{\sigma}}
\alias{\uC}{\underline{\sigma}}
\alias{\tx}{t,x}
\alias{\txp}{t',x'}
\alias{\sxp}{s,x'}
\alias{\sumi}{\sum_{i=1}^I}
\alias{\suml}{\sum_{l=1}^I}
\alias{\mue}{\mu_e}
\alias{\mueN}{\mu_{e,N}}
\alias{\sigmae}{\sigma_e}
\alias{\Rd}{\R^{d}}
\alias{\Rid}{\R^{1\times d}}
\alias{\Rdi}{\R^{d\times 1}}
\alias{\TR}{[0,T]\times \Rd}
\alias{\djN}{\delta^{j,(N)}}

\author{Kim Weston}
\address[Kim Weston]{Department of Mathematics,
Rutgers University, Piscataway, NJ, USA}
\email{kw552@rutgers.edu}
\author{Gordan {\v Z}itkovi{\' c}}
\address[Gordan {\v Z}itkovi{\' c}]{Department of Mathematics, The
University of Texas at Austin, Austin, TX, USA}
\email{gordanz@math.utexas.edu}
\subjclass[2010]{Primary: 91B51. Secondary: 60H30. JEL Classification: D52,
G12}
\keywords{Incomplete markets, Radner equilibrium, Annuity, BSDE, Systems of
BSDE}

\title[Stochastic annuity in equilibrium]{An incomplete equilibrium
with a stochastic annuity}\thanks{The authors are
grateful to Kasper Larsen for helpful discussions.
  The first author acknowledges the support
  by the National Science Foundation under Grant No.~DMS-1606253.
  The second author acknowledges the support
  by the National Science Foundation under Grant
  No.~DSM-1815017 (2018-2021).
  Any
  opinions, findings and
  conclusions or recommendations expressed in this material are those of the
  authors and do not necessarily reflect the views of the National Science
  Foundation (NSF)}
\date{\today}
\begin{document}

\begin{abstract}
We prove the global existence of an incomplete, continuous-time
finite-agent Radner equilibrium in which exponential agents optimize their
expected utility over both running consumption and terminal wealth.  The
market consists of a traded annuity, and, along with unspanned income,
the market is incomplete. Set in a Brownian framework, the income is
driven by a multidimensional diffusion, and, in particular, includes
mean-reverting dynamics.

The equilibrium is characterized by a system of fully coupled quadratic
backward stochastic differential equations, a solution to which is
proved to exist under Markovian assumptions.
\end{abstract}

 \maketitle

\section{Introduction}

We prove the existence of a Radner equilibrium in an incomplete,
continuous-time finite-agent market setting. The economic agents act as
price takers in a fully competitive setting and maximize exponential
utility from running consumption and terminal wealth.  An annuity in
one-net supply is traded on a financial market, and it pays a constant
running and terminal dividend to its shareholders.  The agents choose
between consuming their income and dividend streams or investing in the
annuity.

\bigskip

Although our setting and the income dynamics are quite general, our
financial market looks relatively simple at first glance. The only available
asset is the annuity, and the agents' only choice at any given moment is
how much to consume, keeping in mind that the only way to transfer
wealth from one time to the next is through the annuity. This apparent
simplicity is quite misleading, since the scarcity of the available
traded assets leads to market incompleteness, a notorious difficulty in
equilibrium analysis. Indeed, the fewer assets the agents have at their
disposal, the less efficient the market becomes and the harder it
becomes to use the standard tools such as the representative agent
approach. In our case, this lack of assets is pushed to its limit.

\bigskip

Admittedly, it would be more realistic to consider markets with several
assets, both risky and riskless, where the incompleteness is derived
from the constraints on each asset's ability to incorporate all the risk
present in the environment. We believe that the exploration of such
problems is one of the most interesting and important areas of future
research in this area. Unfortunately, the formidable mathematical
difficulties present in virtually all such problems leave them outside
the scope of the techniques available to us today.

\bigskip

One of the advantages of our model is its ability to incorporate various
income stream dynamics, including unspanned mean-reverting income
streams (which have been studied extensively for their empirical
relevance; see, e.g.,  \cite{W04JME, W06JME, C14JF}). To the best of
our knowledge, our model is the first with exponential agents to
incorporate unspanned mean-reverting income in equilibrium and prove the
existence of such a equilibrium. The
general income streams we study lead to stochastic
annuity dynamics, which prevent a money market account from being
replicated by trading in the annuity in equilibrium.

\bigskip

Our approach crucially relies on the presence of a traded annuity.  We
also need utility functions of exponential type and a Markovian
assumption on the dynamics of the income streams in order to obtain
conveniently structured individual agent problems, amenable to a BSDE
analysis. Even so, the analysis involves a non-standard Ansatz for the
value function, as we need to formally treat the asset price $A$ as a
quantity that, in standard models, plays the role of a money market
account.  We are not the first to introduce a traded annuity into an
equilibrium model (see, e.g., \cite{VV99ET, C01JET, CLM12JET, CL14RAPS,
W18MAFE}).
Our contribution is to recognize the role of a traded annuity price in
the individual agent value functions, even when general income streams
render the annuity dynamics computationally intractable.

\bigskip

The backward stochastic differential equation (BSDE)/PDE-system approach
to incomplete market equilibria dates back to \cite{Z12FS, Z12phd,
CL15FS, KXZ15wp, XinZit18}, with the early work relying on a smallness-
type assumption on some ingredient of the model (the time-horizon, size
of the endowment, etc.) The mathematical analysis of the present paper
is quite involved  and relies heavily on some recent results of
\cite{XinZit18}, which overcome smallness conditions and treat the
existence and stability of solutions to quadratic systems of BSDE.
Moreover, the applicability of those results in our setting is not at
all immediate and is contingent on a number of a-priori estimates
specific to our model.

\bigskip

\emph{Notation and conventions.} For $J,d\in \N$, The set of $J\times
d$-matrices is denoted by $\R^{J\times d}$. The Euclidean space $\R^J$
is identified with the set of $\R^{J\times 1}$, i.e., vectors in $\R^J$
are columns by default. The $i$-th row of a matrix $Z\in \R^{J\times d}$
will be denoted by $Z^i$, and $\abs{\cdot}$ will denote the Euclidean
norm on either $\R^ {J\times d}$ or $\R^J$.

\medskip

We work on a finite time horizon $[0,T]$ with $T>0$, where
$\FFF=\prf{\sF_t}$ is the usual augmentation of the filtration generated
by a $d$-dimensional Brownian motion $B$.

\medskip

The stochastic integral with respect to $B$ is taken for $\R^{1\times
d}$-valued (row) processes as if $dB$ were a column of its components,
i.e., $\int \sigma(t)\, dB_t$ stands for $\sum_{j=1}^d \int
\sigma_j(t)\, dB^j$. Similarly, for a process $Z$ with values in
$\R^{J\times d}$, $\int Z_t\, dB_t$ is an $\R^{J}$-valued process  whose
components are the stochastic integrals of the rows $Z^i$ of $Z$ with
respect to $dB_t$.

\medskip

For a function defined on a domain in $\R^d$, the derivative $D u$ is
always assumed to take row-vector values, i.e., $Du (x) \in \R^{d\times
1}$. If $u$ is $\R^{J}$ valued, the Jacobian $Du$ will, as usual, be
interpreted as an element of $\R^{J\times d}$. The Hessian, $D^2 u$ of a
scalar-valued function takes values in $\R^{d\times d}$, and we will
have no need for Hessians of vector-valued maps in this paper.

\medskip

To relieve the notation, we omit the time-index from many expressions
involving stochastic processes but keep (and abuse) the notation $dt$
for an integral with respect to the Lebesgue measure.

\medskip

The set of all adapted, continuous and uniformly bounded processes is
denoted by $\sS^{\infty}$, and the set of all processes of bounded mean
oscillation by BMO (we refer the reader to \cite{Kaz94} for all the
necessary background on the BMO processes). The family of all
$B$-integrable processes $\sigma$ such that $\int \sigma\, dB$ is in
$BMO$ is denoted by $\bmo$.

\smallskip

The set of all $\FFF$-progressively measurable process is denoted by
$\sP$. $\sP^r$ denotes the set of all $c\in \sP$ with $\int_0^T
\abs{c}^r\, dt<\infty$, a.s. The same notation is used for scalar,
vector or matrix-valued processes - the distinction will always be clear
from the context.

\medskip

\section{The problem}
\subsection{Model primitives}
The model primitives can be divided into three groups. In the first one,
we
describe the uncertain environment underlying the entire economy. In the
second, we
postulate the form of the dynamics of the traded asset, and in the third we
describe the characteristics of individual agents.  A single real consumption
good is taken as the numeraire throughout.
\subsubsection{The state process}
\label{sss:state}
For $d\in\N$, we start with an $\Rd$-valued \define{state process}
$\xi$
whose dynamics is given by
\begin{align}
\label{equ:xi}
  d\xi_t =
  \Ld(t,\xi_t)\, dt + \Sigma(t,\xi_t)\, dB_t,\
  \xi_0 = x_0\in \Rd
\end{align}
where the measurable functions $\Ld:\TR\to\Rd$ and $\Sigma:\TR\to \R^
{d\times d}$
satisfy the following the regularity assumption:
\begin{assumption}[Regularity of the state process]\label{asm:xi}\
There exists a constant $K>0$ such that for all
$t,t'\in[0,T], x, x'\in \Rd$ and $z\in \Rdi$ we have
\begin{enumerate}[itemsep=1ex,topsep=1ex]
  \item $\abs{\Ld(\tx)}\leq K$ and $\abs{\Ld(t, x)-\Ld(t, x')} \leq
  K
  \abs{x-x'}$,
  \item $\abs{\Sigma(\tx)} \leq K$, and
  $\abs{\Sigma(t, x) - \Sigma(t, x')} \leq K(\sqrt{ \abs{t'-t}}+\abs{x-x'})$
  and
  \item $ \abs{\Sigma(t,x) z} \geq \tfrac{1}{K}\abs{z}$.
 \end{enumerate}
 \end{assumption}
 \begin{remark}
Under Assumption \ref{asm:xi}
the SDE \eqref{equ:xi} admits a unique strong solution.
The
full significance of the assumptions above, however, will only
be apparent in the later sections and is related to the ability to use certain
existence results for systems of backward stochastic differential equations.
\end{remark}

\subsubsection{The traded asset}
Our market consists of a single real asset $A$ in one-net
supply, whose dynamics we
postulate to be of the following form:
   \begin{align}
   \label{equ:dA}
      dA_t =  (A_t \mu_t-1)\, dt  + A_t \sigma_t
      \, dB_t,\ A_T=1,
   \end{align}
with the processes $\mu$ and $\sigma$ to be determined in the equilibrium.
It can be interpreted as
an annuity
which pays a dividend at the rate $1$ during $[0,T]$, as well as a unit
lump sum payment at time $T$.

\bigskip

Let $\Gamma$, the \define{coefficient space}, denote the set of
all
pairs $\gamma=(\mu,\sigma)$, where $\mu$ is a scalar-valued process and
$\sigma$ is an $\Rid$-valued
bmo process.
For simplicity, we often identify the market $A^{\gamma}$
with its coefficient pair $\gamma=(\mu,\sigma)$, and talk simply about
\define{the
market $\gamma$}. The set of all markets given by \eqref{equ:dA}
is not bijectively parametrized by $\Gamma$ as not
every $\gamma\in\Gamma$ defines a
market. Indeed,  the terminal condition $A_T=1$ imposes a nontrivial
relationship between $\mu$ and $\sigma$; for example, if $\mu$ is
deterministic, $\sigma$ either has to vanish or one of its components has to be
truly
stochastic. The set of those $\gm\in\Gamma$ that do define a market is
denoted
by $\Gamma_{f}$ and its
elements are said to be \define{feasible}. If we need to stress that it
comes with feasible coefficients $\gamma\in \Gamma_f$, we write $A^{\gamma}$
for the process given by \eqref{equ:dA} .
\subsubsection{Agents} There are a finite number $I \in \N$ of economic
agents, each of which is characterized by the following four elements:
\begin{enumerate}
  \item the \define{risk-aversion coefficient}
  $\alpha^i>0$. It fully
  characterizes the agent's utility
function $U^i$ which is  of exponential form
\[ U^i(c) = -\oo{\alpha^i} e^{ -\alpha^i c} \efor c\in\R.\]
\item the \define{random-endowment (stochastic income) rate}.
  Each agent receives an endowment of the consumption good at the rate
  $e^i_t=e^i(t,\xi_t)$ and a lump sum $e^i_T=e^i(T,\xi_T)$
  at time $T$, for some function $e^i:[0,T]\times \Rd \to \R$.
  \item the \define{initial holding} $\pi^i_0\in \R$ is the initial number
  of shares of the annuity $A$ held by the agent.
\end{enumerate}

With the \define{cumulative endowment rate} defined by
\[ e = \sumi e^i,\]
we impose the following regularity conditions:
\begin{assumption}[Regularity of the endowment rates]\
\label{asm:e}
\label{asm:main}
\begin{enumerate}
\item
Each $e^i$ is bounded and continuous, while its terminal section
$e^i (T,\cdot)$ is $\alpha$-H\"older continuous for some $\alpha \in (0,1]$.
\item
The cumulative endowment process
$e_t=e(t,\xi_t)$ is a
semimartingale with the decomposition
\[ e(t,\xi_t) = e(0,x_0)+\int_0^t \mue(s,\xi_s)\, ds + \int_0^t
\sigmae(s,\xi_s)\, dB_s,\]
where the drift function $\mue:[0,T]\times \Rd \to \R$ is bounded
and continuous and
$\sigmae(s,\xi_s)$ is a bmo process.
\end{enumerate}
\end{assumption}
We will
often abuse notation and write $e^i$  both for the function
$e^i:[0,T]\times \Rd \to \R$ and the stochastic process $e^i_t=e^i(t,
\xi_t)$.
The same
applies to other functions applied to $(t,\xi_t)$ - such as $e$ or
$\mue$.
\begin{remark}\label{rem:scope}
It is worth stopping here to give a few examples of state processes
$\xi$ and the functions $e^i$ which satisfy all the regularity
conditions imposed so far. Once the coefficients $\Ld$ and $\Sigma$ for $\xi$ are
picked so as to satisfy Assumption \ref{asm:xi}, then Assumption \ref{asm:e}
is easy to check for a sufficiently smooth $e^i$ by a simple application of
\itos{} formula.

\medskip

The more interesting observation is that there is room for improvement. It may
seem that the boundedness imposed in Assumption \ref{asm:xi}
rules out some of the most important classes of state
processes such as the classical mean-reverting (Ornstein-Uhlenbeck)
processes. This is not the case, as we have the freedom to choose both
the state process $\xi$, and the deterministic function $e^i$ applied to
it, while only caring about the resulting composition. We illustrate
what we mean by that with a simple example. The reader will easily add
the required bells and whistles to it, and adapt it to other
similar frameworks.

\smallskip

We assume that $d=1$ and that
we are interested in the random endowment rate $e^i(t,\eta_t)$ where
$e^i$ is a bounded and appropriately smooth function, and $\eta_t$ is
an Ornstein-Uhlenbeck process with the dynamics
\begin{align*}
  d\eta_t = \theta (\bar{\eta} - \eta_t)\, dt + \sigma_{\eta}\, dB_t,
\end{align*}
and parameters $\theta,\sigma_{\eta}>0$ and $\eta_0,\bar{\eta} \in \R$.
Since the drift function $x\mapsto \theta(\bar{\eta} - x)$ is not bounded, the
process $\eta$ does not satisfy the conditions of Assumption \ref{asm:xi}.
The process $\eta$ admits, however, an explicit expression in terms of a
stochastic integral of a deterministic process with respect to the underlying
Brownian motion:
\begin{align}
\label{equ:explicit}
   \eta_t = \bar{\eta} + (\eta_0 - \bar{\eta}) e^{-\theta t}+
    \sigma_{\eta} e^{-\theta t} \int_0^t e^{\theta s}\, dB_s
\end{align}
If we define the state process $\xi$  by
\begin{align*}
 d\xi_t = e^{-\theta t}\, dB_t, \ \xi_0=0,
\end{align*}
i.e., if we set $\Ld(t,x) = 0$ and $\Sigma(t,x) = e^{-\theta t}$, the
boundedness
of the time horizon $[0,T]$ allows use to conclude that $\Ld$ and $\Sigma$
satisfy Assumption \ref{asm:xi}.
Moreover, by
\eqref{equ:explicit}, the choice
$f^i(t,x) = e^i(t,\bar{\eta} + (\eta_0 - \bar{\eta}) e^{-\theta t} +
\sigma_{\eta} x)$ yields
\begin{align*}
   f^i(t,\xi_t) = e^i(t,\eta_t).
\end{align*}
This way, we can represent a function of an interesting, but not entirely
compliant state process $\eta$ as a (modified) function of a regular state
process $\xi$. The boundedness (and other regularity properties) of the
function $e^i$ are inherited by $f^i$, thanks to the boundedness from above and
away from zero of the function $t\mapsto e^{-\theta t}$.
\end{remark}

\subsection{Admissibility and equilibrium}
\begin{definition}
\label{def:64F9}
Given a feasible set of coefficients $\gamma=(\mu,\sigma)\in\Gamma_f$,
a pair $(\pi,c)$ of scalar processes is said to be  a \define{$\gm$-admissible
strategy} for agent $i$ if
\begin{enumerate}
  \item $\abs{c}+\abs{\pi (A^{\gm}\mu-1)} \in \sP^1$ and $\pi A^{\gm} \sigma
  \in \bmo$.
\item the \define{gains process} $X=X^{\pi,\gm}=
\pi A^{\gm}$ is a semimartingale which satisfies the \define{self-financing condition}
\[ dX =  \pi\, dA^{\gm} +(e^i - c + \pi  )\,
dt .\]
\end{enumerate}
The set of all $\gm$-admissible strategies for agent $i$
is denoted by $\sA_{\gm}^i$, and
the subset of $\sA_{\gm}^i$ consisting of the strategies with $\pi(0)=\pi_0$
is given by $\sA_{\gm}^i(\pi_0)$.
\end{definition}

\begin{definition}
\label{def:equilibrium}
We say that $\gamma^*\in\Gamma_f$ is the set of \define{equilibrium market
coefficients} (and $A^{\gm^*}$ an \define{equilibrium market}) if
there exist $\gamma^*$-admissible strategies $( \hpi^i, \hc^i) \in
\sA^i_{\gm^*}(\pi^i_0)$,
$i=\ft{1}{I}$,  such that the following two conditions hold:
\begin{enumerate}
  \item \emph{Single-agent optimality:}
  For each $i$ and all
  $(\pi,c) \in \sA^i_{\gm^*}(\pi_0^i)$ we have
  \[\textstyle
  \EE[ \int_0^T U^i(\hc^i_t)\, dt] + \EE[ U^i(X^{\hpi^i,\hc^i}_T + e^i_T)]
  \geq
  \EE[ \int_0^T U^i(c_t)\, dt] + \EE[ U^i(X^{\pi,c}_T + e^i_T)]\]
  \item \emph{Market clearing:}
  \begin{align*}
    \sumi \hpi^i = 1 \eand \sumi \hc^i = e + 1 \text{ on } [0,T), \eand
    \sumi X^{\hpi^i,\hc^i}_T = 1, \text{ a.s.}
  \end{align*}
  \end{enumerate}
\end{definition}
\section{Results}
\subsection{A BSDE characterization}
Our first result is a
characterization of equilibria in terms of a system
of
backward stochastic differential equations (BSDE). These systems
consist of $1+I$ equations, with the first component generally playing a
different role from the other $I$. For that reason, it pays to
depart slightly from the classical
notation $(Y,Z)$, where $Y$ has as many components as there are equations, and
the driver $Z$ is a matrix process whose additional dimension reflects the
number of driving Brownian motions. Instead,  we
use the notation $\big((a,Y), (\sigma,Z)\big)$ where $a$ is a scalar and $Y$ is
$\R^{I\times 1}$-valued. Similarly, $\sigma$ and $Z$ are $\Rid$- and $\R^
{I\times d}$-valued processes, respectively. As usual, we say that $((a,Y),
(\sigma,Z))$ is an $(\sS^{\infty} \times \bmo)$-solution if all the components
of $a$ and $Y$ are in $\sS^{ \infty}$, and all components of $(\sigma,Z)$ are
in $\bmo$.
To simplify the notation, we also
introduce the following, derived, quantities:
\begin{equation}\label{eqn:constants} \oo{\ba} := \sumi \oo{\alpha^i},
\eand   \kappa^i :=
\tf{\ba}{\alpha_i}>0 \text{ so that } \sumi \kappa^i=1.\end{equation}
\begin{theorem}[A BSDE Characterization]
\label{thm:char}
Suppose that $\sumi \pi^i_0=1$, that
Assumption \ref{asm:main} holds, and that
 $\big((a,Y),(\sigma,Z)\big)$ is
 an $\sS^{\infty}\times \bmo$-solution to
\begin{equation}
 \begin{aligned}
 \label{equ:BSDE}
    da   &= \sigma\, dB + \Big( \ba \mue - \tot \suml \kappa^l
    \abs{Z^l}^2 - \exp(-a) \Big)\, dt, &a_T
    & = 0 \\
    dY^i &= Z^i\, dB  + \Big( \tot \abs{Z^i}^2 + \exp(-a)( 1 + a + Y^i -
    \alpha^i e^i ) \Big)\, dt, &Y^i_T &= \alpha^i e^i_T,\ 1\leq i \leq I.
 \end{aligned}
 \end{equation}
Then $A=\exp(a)$ is an equilibrium annuity price with
market coefficients $(\mu,\sigma) \in \Gamma_f$, where $\mu$ is given by
 \begin{align}
 \label{equ:mu}
   \mu = \ba \mue +  \tot \abs{\sigma}^2
     - \frac{1}{2}\sumi \kappa^i \abs{Z^i}^2.
 \end{align}
\end{theorem}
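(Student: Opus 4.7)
The plan is to proceed in three stages: verify feasibility, prove single-agent optimality via a value-function ansatz, and close the system through market clearing. For feasibility, applying Itô's formula to $a$ with the dynamics from \eqref{equ:BSDE} gives
\[
    dA = A\sigma\,dB + A\Bigl(\ba\mue - \tfrac12\sum_l\kappa^l|Z^l|^2 - e^{-a} + \tfrac12|\sigma|^2\Bigr)dt = A\sigma\,dB + (A\mu - 1)\,dt
\]
with $\mu$ exactly as in \eqref{equ:mu}, and $a_T = 0$ forces $A_T = 1$; the $\sS^{\infty}\times\bmo$-regularity then certifies $(\mu,\sigma)\in\Gamma_f$.

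For single-agent optimality, I would introduce the ansatz
\[
    V^i(t,x) := -\tfrac{1}{\alpha^i}\exp\bigl(-\alpha^i x/A_t - Y^i_t\bigr),
\]
which, thanks to $a_T=0$ and $Y^i_T = \alpha^i e^i_T$, satisfies $V^i(T,x) = U^i(x + e^i_T)$. Since $X = \pi A$ together with self-financing forces $dX = (X\mu + e^i - c)\,dt + X\sigma\,dB$, making $c$ effectively the only free control (and $\pi = X/A$ a finite-variation process), I would run a martingale optimality argument on $R^c_t := \int_0^t U^i(c_s)\,ds + V^i(t,X_t)$. The Brownian contribution $X\sigma\,dB$ cancels out when computing $d(X/A)$; together with \eqref{equ:mu} annihilating the $|\sigma|^2$ and $\sum\kappa^l|Z^l|^2$ terms, the drift of $R^c$ reduces to
\[
    (\alpha^i)^{-1}\Bigl\{e^{G-a}\bigl[1+a+Y^i - \alpha^i(c - X/A)\bigr] - e^{-\alpha^i c}\Bigr\},
    \qquad G := -\alpha^i X/A - Y^i,
\]
which is concave in $c$ with its (zero) maximum at $\hc^i := X/A + (a+Y^i)/\alpha^i$. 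Since $V^i\leq 0$ and $U^i\leq 0$, $R^c$ is a non-positive local supermartingale, hence a true supermartingale, giving $\ee{R^c_T}\leq V^i(0,X^i_0)$. The matching equality at $\hc^i$ requires promoting the local martingale $R^{\hc^i}$ to a true martingale, which I would obtain from the $\bmo$-norm of $Z^i$ and boundedness of $a,Y^i$ through exponential integrability of $e^G$. The candidate optimal share holding is then $\hpi^i := \hat X^i/A$.

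For market clearing, the pivotal auxiliary identity is
\[
    a_t/\ba + \sum_{i=1}^I Y^i_t/\alpha^i = e_t \quad\text{on } [0,T].
\]
To prove it, set $\Phi := a/\ba + \sum_i Y^i/\alpha^i - e$. The terminal value is $\Phi_T = 0$, and combining the BSDEs for $a$ and $Y^i$ with Assumption \ref{asm:e} (using $1/\alpha^i = \kappa^i/\ba$ to cancel the $|Z^i|^2$ contributions) gives $d\Phi = \bigl[(\sigma/\ba) + \sum_i Z^i/\alpha^i - \sigmae\bigr]dB + e^{-a}\Phi\,dt$; since $e^{-a}$ is bounded, uniqueness for this linear BSDE forces $\Phi\equiv 0$. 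Plugging this into $\sum_i\hc^i = (1/A)\sum_i\hat X^i + a/\ba + \sum_i Y^i/\alpha^i$ reduces consumption clearing to the wealth clearing $\sum_i\hat X^i = A$. The latter follows from a uniqueness argument: the sum $\sum_i\hat X^i$ and $A$ satisfy the same linear SDE (after invoking consumption clearing) with common initial value $A_0$ thanks to $\sum_i\pi^i_0 = 1$. Terminal clearing $\sum_i X^{\hpi^i,\hc^i}_T = 1$ is then immediate from $A_T = 1$.

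The main obstacle will be the integrability step within the martingale-optimality argument---specifically, securing that $R^{\hc^i}$ is a true martingale and that $R^{\pi,c}$ is uniformly integrable for arbitrary admissible $(\pi,c)$. Since admissible wealth $X$ is unbounded and $V^i$ has $X/A$ in the exponent, one cannot simply bound $\ee{e^{-\alpha^i X/A}}$; controlling it requires precise $\bmo$-based exponential estimates for the wealth process driven by $\sigma$ (in BMO) rather than any direct bound, and is the reason the feasibility set $\Gamma_f$ and Assumption \ref{asm:xi} were tuned as they are.
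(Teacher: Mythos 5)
Your proposal follows the same route as the paper's proof: the same value-function ansatz $-\exp(-\alpha^i X/A - Y^i)$, the same observation that $\pi=X/A$ has finite variation so the control is effectively $c$ alone, the same pointwise maximization of the drift (the paper phrases it as Young's inequality, you phrase it as concavity in $c$ with a zero maximum at $\hc^i$; these are the same calculation), the same linear BSDE trick for market clearing (your $\Phi$ is the paper's $F$ up to the scalar $\ba$), and the same SDE-uniqueness step for wealth clearing. So the overall architecture is correct.

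There is, however, one step that would fail as written: the assertion that ``$R^c$ is a non-positive local supermartingale, hence a true supermartingale.'' This implication is false in general. A local supermartingale bounded from \emph{below} is a supermartingale (Fatou), but one bounded from \emph{above} need not be: if $Z$ is a nonnegative strict local martingale (e.g.\ the inverse of a three-dimensional Bessel process), then $-Z\le 0$ is a local supermartingale whose expectation increases, so it is not a supermartingale. Here $R^c$ is bounded above by $0$, precisely the wrong side. The supermartingale property therefore does not come for free from sign considerations; one must use the admissibility requirements (in particular $\pi A\sigma\in\bmo$) to control the local martingale part $\int(-\tV^i Z^i)\,dB$, exactly as the paper's feasibility set $\sA^i_\gamma$ is designed to make possible. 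Relatedly, you identify the equality (true-martingale) case at $\hc^i$ as the obstacle, but that is in fact the \emph{easier} one: for the candidate optimizer, $\hpi^i=\hX^i/A$ has bounded derivative (since $e^i$, $a$, $Y^i$, $A$, $1/A$ are all bounded), so $\hpi^i$ is bounded and $e^G=\exp(-\alpha^i\hpi^i-Y^i)$ is bounded, making the martingale argument straightforward. It is the supermartingale inequality for a \emph{general} admissible $(\pi,c)$ with potentially unbounded $\pi$ that needs the $\bmo$ admissibility conditions, not just the sign of $R^c$.
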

\begin{remark}We note that the validity of Theorem \ref{thm:char} above does
not depend on
Assumption \ref{asm:xi}. In fact, no Markovian assumption is needed for it, at
all.
Moreover, the full force of Assumption \ref{asm:e} is not needed, either.
It would be enough to assume that each $e^i$ is in bmo and that the cumulative
endowment process $e$
is a semimartingale of the form $de = \mue\, dt + \sigmae\, dB$, where
$\mue$ and $\sigmae$ are  general bmo processes and not necessarily
bounded functions of
a state process.
\end{remark}
\begin{proof}
 Having fixed an $(\sS^{\infty},\bmo)$-solution $\big( (a,Y), (\sigma,
 Z)\big)$, we
 set $A=\exp(a)$ and define $\mu$ as in
 \eqref{equ:mu}, so that $A$ satisfies \eqref{equ:dA}.
With the market coefficients $\gamma=(\mu,\sigma)$ fixed,
we pick an agent  $i\in \set{1,\dots, I}$ and a pair $(\pi,c) \in \sA^i_{\gm}
(\pi^i_0)$,
and define processes $X^i$,  $\tV^i$ 
and  $V^i$ by
\[ X^i = \pi A,\   \tV^i = -
\exp(-\alpha^i X^i/A - Y^i) \eand
V^i=\tV^i+
 \int_0^{\cdot} -\exp(-\alpha^i c_t) \,dt.  \]
The self-financing property of  $(\pi,c)$ implies that the
semimartingale decomposition of $V^i$ is given by
$dV^i = \mu_V\, dt + \sigma_V\, dB$,
where
 \begin{align*}
 \mu_V = -\exp(-\alpha^i c)  +\tf{-\tV^i}{A} \big(
 1-\log( \tf{-\tV^i}{A})  \big)
- \big( \alpha^i c   \big)
\tf{-\tV^i}{A}\ \eand\ \sigma_V = -\tV^i Z^i.
 \end{align*}
Young's inequality implies that $\mu_V \leq 0$ and that
the coefficients
$\mu_V$ and $\sigma_V$ are regular enough
to conclude that $V^i$ is a supermartingale for all
admissible $(\pi,c)$. Therefore,
 \begin{equation*}
 \begin{split}
 \EE[ \int_0^T U^i(c_s)\, ds] &+ \EE[ U^i(X^i_T+e^i_T)] =\\
 &=
\oo{\alpha^i}
\EE[ \int_0^T -\exp(-\alpha^i c_s ) \, ds]  + \oo{\alpha^i} \EE[ \exp(-
\alpha^i
(X^i_T  + e^i_T))] \\
&= \oo{\alpha^i} \Big( \EE[\int_0^T -\exp(-\alpha^i c_s )\, ds ]+ \EE[
\exp(-\alpha^i X^i_T/A_T -Y^i_T  )] \Big)\\
&=
\oo{\alpha^i}\EE[ V^i_T] \leq \oo{\alpha^i} V^i_0 = -\oo{\alpha^i}
\exp(-\alpha^i  \pi^i_0   - Y^i_0).
\end{split}
\end{equation*}
Next, in order to characterize the optimizer, we construct a
consumption process for
which $\mu_V=0$. More precisely, we let the process $\hX^i$  be the
unique solution of
the
following linear SDE:
 \begin{align}
 \label{equ:hX-SDE}
   \hX^i_0= \pi^i_0 A_0,\ d\hX^i = \Big(\mu \hX^i + (e^i  - \oo{\alpha^i}
   (a+Y^i)- \tf{\hX^i}{A})\Big)\, dt+\hX^i \sigma\ dB,
 \end{align}
and set
\[ \hc^i =
\oo{\alpha^i} (a+Y^i)+ \tf{\hX^i}{A} ,\ \hpi^i =
\tf{\hX^i}{A}.\]
It follows immediately that $(\hpi^i, \hc^i) \in \sA_{\gm}^i(\pi^i_0)$ and that
the process $\hX$ is the associated gains process. The choice of $\hc^i$,
through $\hX^i$, makes the process $V^i$ a martingale and the pair $(\hpi^i,
\hc^i)$ optimal for agent $i$.

\medskip

Turning to market clearing,
we consider the process $F= a+\sumi \kappa^i Y^i - \ba e $, whose dynamics
are given by
\begin{align}
\label{equ:F-dyn}
dF &= (\sigma + \sumi \kappa^i Z^i - \ba \sigmae)\,dB +
 \exp(-a) F \, dt,\ F_T=0.
\end{align}
In other words, the pair $(Y,\zeta)=(F,\sigma + \sumi \kappa^i Z^i - \ba
\sigmae)$ is an
$\sS^{\infty}\times\bmo$-solution to the linear BSDE
\begin{align*}
dY = \zeta\, dB + \exp(-a)Y\, dt, \ Y_T=0.
\end{align*}
Since $a$ is bounded,
the coefficients of this BSDE are globally Lipschitz, and, therefore,
by the uniqueness
theorem (see \cite[Theorem 4.3.1, p.~84]{Zha17}), we can
conclude
that
$F=0$. That implies that
\[ a+\sumi \kappa^i Y^i = \ba e \text{ on } [0,T], \]
and, so,
\[ \sumi \hc^i = e + \oo{A} \sumi \hX^i.\]
The form of the dynamics \eqref{equ:hX-SDE} of each $\hX^i$ leads to the
following dynamics for $\hX = \sumi \hX^i$:
 \begin{align}
 \label{equ:SDE2}
   d\hX = (\mu \hX - \oo{A} \hX)\, dt + \hX \sigma\, dt.
 \end{align}
The assumption that $\sum \pi^i_0 =1$ implies that $\hX_0 = A_0$, which, in
turn, implies that the process $A$ is also a solution to \eqref{equ:SDE2}. By
uniqueness, we must have $\hX = A$ and conclude that the clearing
conditions are satisfied.
\end{proof}
\subsection{Existence of an equilibrium.}
Next, we show that under additional assumptions on the problem
ingredients - most notably that of a Markovian structure - the characterization
of Theorem \ref{thm:char} can be used to establish the existence of an
equilibrium market.
\begin{theorem}
Under \label{thm:exist}
Assumptions \ref{asm:xi} and \ref{asm:main}, the system \eqref{equ:BSDE}
admits an $\sS^ {\infty} \times \bmo$-solution.
\end{theorem}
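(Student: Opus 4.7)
The plan is to recast the system \eqref{equ:BSDE} in the Markovian framework dictated by Assumptions \ref{asm:xi} and \ref{asm:main} and to invoke the existence theory for fully coupled quadratic Markovian BSDE systems developed in \cite{XinZit18}. That theory dispenses with the usual smallness conditions, but it requires Markovian, sufficiently regular data together with uniform a priori $\sS^{\infty}\times\bmo$ estimates on any candidate solution. The regularity of the coefficients is supplied by Assumptions \ref{asm:xi}--\ref{asm:main}, so the bulk of the proof consists in deriving these estimates.

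The key preliminary observation is that a lower bound on $a$ comes essentially for free. Taking conditional expectations in the $a$-equation (using $a_T=0$ and $\sigma\in\bmo$, so that $\int\sigma\,dB$ is a true martingale) yields
\[
a_t \;=\; \EE\!\left[\int_t^T \Big(\tot\suml\kappa^l \abs{Z^l}^2 + \exp(-a_s) - \ba\,\mue(s,\xi_s)\Big)\,ds \,\Big|\, \sF_t\right].
\]
Since $\suml\kappa^l \abs{Z^l}^2 \geq 0$ and $\exp(-a)\geq 0$, while $\mue$ is bounded by Assumption \ref{asm:e}(2), this gives the pointwise lower bound $a_t \geq -\ba\|\mue\|_{\infty}T$, independently of any control on the $Z^l$.

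With the lower bound on $a$ in hand, the coefficient $\exp(-a)$ is bounded, and because the map $x\mapsto xe^{-x}$ is bounded above on $\R$ and bounded below on $\{x\geq -C\}$, so is $a\exp(-a)$. Hence each $Y^i$-equation is a scalar BSDE with driver quadratic in $Z^i$, affine in $Y^i$ with bounded coefficient, bounded forcing, and bounded terminal data $\alpha^i e^i_T$ from Assumption \ref{asm:e}(1). Standard bounded-solution theory for scalar quadratic BSDEs then yields uniform $\sS^{\infty}$ bounds on each $Y^i$, and the Kazamaki BMO theory (see \cite{Kaz94}) delivers uniform $\bmo$ bounds on the $Z^i$ and, in the same fashion, on $\sigma$. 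Feeding the resulting bound on $\EE[\int_t^T\suml\kappa^l\abs{Z^l}^2\,ds\,|\,\sF_t]$ back into the displayed formula closes the loop by producing the complementary upper bound on $a$; all constants depend only on the data in Assumptions \ref{asm:xi}--\ref{asm:main}.

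Equipped with these uniform $\sS^{\infty}\times\bmo$ estimates and the H\"older regularity of the terminal sections $e^i(T,\cdot)$ guaranteed by Assumption \ref{asm:e}(1), one applies the existence machinery of \cite{XinZit18} to \eqref{equ:BSDE}, e.g.\ via truncation of the quadratic nonlinearities followed by a compactness/stability argument. I expect the principal obstacle to be precisely this matching step: verifying that the specific structure of \eqref{equ:BSDE} -- fully coupled through both $\exp(-a)$ in the $Y^i$-drivers and the $\abs{Z^l}^2$ terms in the $a$-driver -- meets the structural hypotheses of \cite{XinZit18}, and that the a priori bounds above survive the associated approximation scheme uniformly. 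Once this compatibility is verified, the cited existence theorem produces the desired $\sS^{\infty}\times\bmo$-solution.
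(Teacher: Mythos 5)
Your a priori estimate strategy closely parallels Step~2 of the paper's proof: the lower bound on $a$ via the conditional-expectation identity (the paper phrases it as $a^{(N)} - \int_0^\cdot \ba\mue\,dt$ being a supermartingale, which is the same computation), the resulting boundedness of $\exp(-a)$ and $(1+a)\exp(-a)$ in the $Y^i$-drivers, the $\sS^\infty$ bounds on $Y^i$, the $\bmo$ bounds via the exponential transform, and finally the closing upper bound on $a$. That portion is sound, modulo the fact that these estimates must actually be run on a truncated/regularized system where a solution is known to exist (the paper introduces an explicit Lipschitz truncation \eqref{equ:BSDE-N} and proves existence of bounded Markovian solutions for it in Proposition~\ref{pro:F-Lip}); as written, your argument is purely a priori and does not by itself construct anything.

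The genuine gap is exactly the one you flag and then do not resolve. Uniform $\sS^\infty\times\bmo$ bounds are \emph{not} enough to apply \cite[Theorem 2.8]{XinZit18}; that theorem additionally requires a common Lyapunov function for the family of truncated drivers, and this is where the problem-specific work lies. The paper closes this by exhibiting the Bensoussan--Frehse structure of the system: after splitting each driver into a $z$-free part (which is uniformly bounded, given the lower bound on $a$) and a quadratic-in-$z$ part, the quadratic part is upper-triangular --- the $i$-th $Y$-driver is quadratic only in $z^i$, while the $a$-driver is quadratic in $z^1,\dots,z^I$. Ordering the components so that $a$ comes last makes the uniform Bensoussan--Frehse condition of \cite[Definition 2.10]{XinZit18} hold, and \cite[Proposition 2.11]{XinZit18} then produces the Lyapunov function needed for the compactness/stability step. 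Your phrase ``I expect the principal obstacle to be precisely this matching step'' correctly identifies where the content is, but leaving it open means the proof is incomplete: without the triangularity observation, the passage to the limit in your ``truncation followed by compactness/stability'' step is unjustified.
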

The BSDE characterization of Theorem \ref{thm:char} immediately implies the
main result of the paper:
\begin{corollary}
Under Assumptions \ref{asm:xi} and \ref{asm:main}, there exists a set
$\gamma^*=(\mu^*,\sigma^*)$ of feasible market coefficients such that $A^
{\gamma^*}$ is an equilibrium market.
\end{corollary}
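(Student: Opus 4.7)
The plan is to combine Theorem \ref{thm:exist} with the BSDE characterization in Theorem \ref{thm:char}; the corollary is essentially a packaging of those two results, with the real work already having been carried out in Theorem \ref{thm:exist}.

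First, I invoke Theorem \ref{thm:exist} to produce an $\sS^{\infty}\times\bmo$-solution $\big((a,Y),(\sigma,Z)\big)$ to the BSDE system \eqref{equ:BSDE}. This is the only step that uses Assumption \ref{asm:xi} and the full strength of Assumption \ref{asm:main}; the remaining steps require no additional regularity inputs.

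Next, I define the candidate equilibrium coefficients by setting $\sigma^* = \sigma$ and
\[
\mu^* = \ba\mue + \tot\abs{\sigma}^2 - \tot\suml \kappa^l\abs{Z^l}^2,
\]
and I verify feasibility $\gamma^* \in \Gamma_f$. Applying It\^o's formula to the bounded process $a$, together with the BSDE for $a$ in \eqref{equ:BSDE}, shows that $A^* := \exp(a)$ satisfies the dynamics \eqref{equ:dA} with the coefficients $(\mu^*,\sigma^*)$; the terminal constraint $A^*_T = \exp(a_T) = \exp(0) = 1$ holds by the terminal condition for $a$, and $\sigma^* \in \bmo$ by construction. Hence $\gamma^* \in \Gamma_f$ and the associated annuity $A^{\gamma^*}$ coincides with $A^*$.

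Finally, since the initial-holdings clearing condition $\sumi \pi^i_0 = 1$ is part of the specification of the economy (without it, the market-clearing condition $\sumi \hpi^i = 1$ cannot hold at $t=0$), all hypotheses of Theorem \ref{thm:char} are in force. That theorem, applied to the solution obtained in the first step, yields directly that $A^{\gamma^*}$ is an equilibrium market, with equilibrium strategies $(\hpi^i,\hc^i) \in \sA^i_{\gamma^*}(\pi^i_0)$ given by the explicit construction in its proof. The only nontrivial step is the BSDE existence assertion invoked at the very start; the rest is bookkeeping.
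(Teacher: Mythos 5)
Your proposal is correct and takes the same route as the paper: Theorem \ref{thm:exist} produces an $\sS^{\infty}\times\bmo$-solution to \eqref{equ:BSDE}, and Theorem \ref{thm:char} then yields the equilibrium market. Your explicit feasibility check via It\^o's formula is harmless but redundant, since Theorem \ref{thm:char} already asserts $(\mu,\sigma)\in\Gamma_f$ as part of its conclusion.
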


\begin{proof}[Proof of Theorem \ref{thm:exist}]
In certain situations it will be convenient to standardize the notation, so we
also
write $Y^{0}$ for $a$, $Z^{0}$
for $\sigma$, and set
  \[ g^i(x) =
  \begin{cases}
    0, & i=0, \\
    \alpha^i e^i(T,x), & 1\leq i \leq I.\\
  \end{cases}\]
   The $dt$-terms in \eqref{equ:BSDE} define the driver
$f:[0,T]\times \Rd \times \R^{I+1} \times \R^{(I+1) \times d} \to
\R^{I+1}$ in the usual way:
\begin{align*}
    f^0(t,x,y,z) & =\ba \mue(t,x) -
    \tot \suml \kappa^l \abs{z^l}^2 - \exp(-y^0),\\
    f^i(t,x,y,z) & = \tot \abs{z^i}^2 + \exp(-y^0)\Big( 1 + y^0 + y^i -
    \alpha^i e^i(t,x) \Big), \efor i=\ft{1}{I}.
\end{align*}
The system \eqref{equ:BSDE}, written in the new notation, becomes
\begin{align}
\label{equ:BSDE-proof}
   dY^i_t = f^i(t,\xi_t,Y_t,Z_t)\, dt + Z^i_t\, dB_t\, \ Y^i_T = g^i(\xi_T),\
   i= \ft{0}{I}.
\end{align}

\emph{Step 1 (truncation).}
We start by truncating the driver $f$ to obtain a sequence of
well-behaved, Lipschitz problems.
More precisely,
given $N>0$ we define
\[ \iota_N(x) = \max(\min(x,N),-N) \efor x\in\R \eand q_N(z) = \abs{z} \iota_N(
\abs{z}), \efor z\in \R^{1\times d},\]
so that $\iota_N$ and $q_N$ are
Lipschitz functions with Lipschitz constants $1$ and $N$, respectively.
Moreover,
\[ \abs{\iota_N(x)}\leq N \eand \abs{q_N(z)} \leq N \abs{z}.\]

\medskip

Using the functions defined above, for each $N\in\N$ we pose a truncated
version of \eqref{equ:BSDE}:
\begin{equation}
 \label{equ:BSDE-N}
 \tag{BSDE${ }_N$}
   \begin{aligned}
   da &= \sigma\, dB +
   \Big(
   \ba \mue  - \tot \suml \kappa^l q_N(Z^l) -
   \exp(-\ioN{a}) \Big)\, dt,\\
d Y^i  &= Z^i  \, dB +
 \Big(
  \tot q_N(Z^i)
 + \exp(-\ioN{a})( 1 + \ioN{a} + \ioN{Y^i}  - \alpha^i e^i)
 \Big)\, dt
   \end{aligned}
\end{equation}
with the terminal conditions $Y^i_T=\ioN{ g^i_T}$ and $a_T=0$. We define the
driver $(t,x,y,z) \mapsto \upN{f}(t,x,y,z)$ from the $dt$-terms in the standard
way.

\medskip

For each $N\in\N$, $\upN{f}$ is continuous in all of its variables,
  uniformly Lipschitz in both $z$ and $y$,  and $\upN{f}(t,x,0,0)$ is
  bounded.
  Assumption \ref{asm:xi} guarantees that the same is true for the function
  $\upN{F}(t,x,y,z) = - \upN {f}(t,x,y,z \Sigma^{-1}(t,x))$.
  Therefore, we can apply
  Proposition \ref{pro:F-Lip} in the Appendix to conclude that
  there exists a solution $(\YN, \ZN)$ to \eqref{equ:BSDE-N} of the form
  \begin{align*}
  \YN_t = \vN(t,\xi_t),\ \ZN_t=\wN(t,\xi_t),
  \end{align*}
  with $\vN:\TR\to\R^{I+1}$
  bounded and $\wN:\TR\to \R^{(I+1)\times d}$ such
  that $\ZN$ is a bmo process.
  We note that existence for
  \eqref{equ:BSDE-N}  is also guaranteed by the
 classical result \cite[Theorem 3.1, p.~58]{ParPen90}, but only in the
 class $\sS^2\times \sH^2$, which is too big for our purposes.

\bigskip
\emph{Step 2 (uniform estimates).}  The bounds guaranteed by Proposition
\ref{pro:F-Lip} all depend on the truncation constant $N$, so our
next task is to explore the special structure of our system and
establish bounds in terms of universal quantities.  A universal constant,
in this proof, will be a quantity that depends on the constants $\alpha^i$,
 the time-horizon $T$ and the $\sS^{\infty}$-bounds on $e^i$ and $\mue$,
but not on $N$. We denote such a constant
by $C$, and allow it to change from line to line.

\medskip

Let $\big( (\aN, \YiN), (\sigN, \ZN)\big)$ be the solution to the truncated
system from Step 1~above.
It follows from the dynamics of
$\aN$  and the fact that $q_N(z)\geq 0$ for all $z\in\Rid$
 that $\aN -\int_0^{\cdot} \ba \mue\, dt$ is a supermartingale, so
that for all $t\in[0,T]$,
\[ \aN_t  \geq \EE[ \aN_T -\int_t^T \ba \mue\, dt | \sF_t]
\geq  -(T-t)\norm{\ba \mue}_{\sS^{\infty}}, \text{ i.e., } \aN_t \geq
-C.\]

 Next, we turn to $\YN$ and use the fact that the components of
 $Y$ are coupled only through $a$.
 This way, we can get uniform bounds on $\YiN$ if
 we manage to produce a uniform bound on the function of $a$ appearing on
 the right-hand side. We start by using the
 following easy-to-check inequality
 \[ \exp(-x)(1+\abs{x})\leq \exp(2x^-),\text{ for all } x\in\R,\]
and the fact that $(\ioN{x})^- \leq (x)^-$ for all $x$, to obtain that for all $t\in[0,T]$,
  \[ \exp(-\ioN{\aN_t})\Big(1+\babs{\ioN{\aN_t}}\Big)\leq C.\]
It is readily checked that there exist a bounded measurable
function
$\upN{\delta}: \Rid \to \Rd$, such that
\begin{align*}
     q_N(z) = z \upN{\delta}(z) = \sum_{j=1}^d z_j \djN(z), \efor z=(z_1,\dots,
     z_j)\in \Rid.
\end{align*}
 Therefore, for each $i=\ft{1}{I}$, there exists a probability measure $\PP^i\,
 (=\PP^{i,N}) \sim\PP$ under which the process
 $\tilde{B}^i = B + \int_0^\cdot \upN{\delta}(\ZiN)\, dt$ is a
 Brownian motion on
 $[0,T]$. Since $\ZiN$ is guaranteed to be in $\bmo$, it remains in $\bmo$
 under the measure $\PP^i$ (see \cite[Theorem 3.3, p.~57]{Kaz94}). Therefore,
  the process
 $\int \ZiN\, dB^i$ is a
 $\PP^i$-martingale and we can take the
 expectation of the $i$-th
 equation with
 respect to $\PP^i$ to obtain
 \begin{align*}
   \babs{\YiN_t}
   &\leq \babs{\EE^i[\ioN{\alpha^ie^i(T)}| \sF_t]}
     +\int_t^T\EE^i\left[\exp(-\ioN
   {\aN_s})\Big(1+\babs{\ioN{\aN_s}}\Big)| \sF_t\right]\, ds\\
   &\ \ \ \ \ +\int_t^T\EE^i\left[\exp(-\ioN{\aN_s})\babs{\ioN
   {\YiN_s}-\alpha^ie^i_s}\, | \sF_t\right]\, ds\\
   &\leq C\Big(1+\int_t^T\EE^i[\babs{\YiN_s}\ | \sF_t]\, ds\Big)
   \leq C\Big(1+\int_t^T y^i(s)ds\Big),
 \end{align*}
 where
 $y^i(t)= \norm{\YiN_t}_{\linf}$. Thus, $y^i$ satisfies
 \[ y^i(t) \leq C(1+\int_t^T y^i(s)\, ds), \eforall t\in [0,T],\]
 for some universal constant $C$.
   Gronwall's inequality implies that $y^i(0)=\norm{\YiN}_{\Sin}$ is bounded
   by another universal constant, so we conclude that there exists a
   universal $\Sin$-bound on all $\YiN$.

\medskip

Our next goal is to produce universal bmo bounds on the processes $\ZiN$. This
will follow by using the universal boundedness of the $Z$-free
terms in the driver of $\YiN$ obtained above. Since the $i$-th component of the driver
$\upN{f}$ depends on $\ZN$ only through $\ZiN$, for $1\leq i \leq I$,
we can apply standard
   exponential-transform estimates.  We adapt
\label{page:exp-tranform}
   the argument in \cite[Proposition 2.1, p. 2925]{BriEli13} and define
   $$
     \phi(x) := \frac{\exp(2|x|)-1-2|x|}{4} \efor x\in\R,
   $$
   noting that both $\phi$ and $\phi'$ are nonnegative and increasing, while
   $\phi\in C^2(\R)$ with
   $\phi''-2|\phi'|=1$.
   Thus, for any stopping time $\tau$ in $[0,T]$, It\^o's Lemma gives us that
   \begin{align*}
     0&\leq \phi(\YiN_\tau)
     \leq \EE[\phi(\YiN_T)|\sF_\tau] +\EE\left[\int_\tau^T \babs{\phi'(\YiN_s)}C\left(1+\|\YiN\|_
     {\Sin}\right) |\sF_\tau\right]   \\
     & \qquad + \EE\left[
     \int_\tau^T\left(\babs{\frac{1}{2}\phi'(\YiN_s)}
     \babs{\ZiN_s}^2-\phi''(\YiN_s)\babs{\ZiN_s}^2\right)ds\ |\sF_\tau\right]\\
     &\leq \phi(\|\YiN\|_{\Sin}) + \\
     & \qquad +C\int_0^T\phi'(\|\YiN\|_{\Sin})(1+\|\YiN\|_
     {\Sin})ds
       - \EE\left[\int_\tau^T \babs{\ZiN_s}^2ds\ |\sF_\tau\right].
   \end{align*}
   Rearranging terms yields
   \begin{align*}
     \EE&\big[\int_\tau^T\babs{\ZiN_s}^2ds\Big| \mathcal{F}_\tau\Big]
     \leq \phi(\|\YiN\|_{\Sin})+\\
     & \qquad +C\int_0^T\phi'(\|\YiN\|_{\Sin})(1+\|\YiN\|_
     {\Sin})ds.
   \end{align*}
The right-hand side admits a universal bound (independent of $N$ and $\tau$),
and, hence,
so does the left-hand side.

\medskip

Finally, we go back to the equation satisfied by $\aN$ and note that the
term $\exp(-\ioN{\aN})$ is bounded because $(\aN)^-$ is.
We can bound $\aN$ from above in an $N$-independent manner,
by a combination of the $\bmo$-bounds on $\ZN$ and the sup norm of $\mue$.
By taking expectations and using universal boundedness/$\bmo$-property of
all the other terms, we conclude that $\upN{\sigma}$ also admits a universal
$\bmo$-bound.

\medskip

Having the universal bounds on $\YN$ and $\aN$, we can remove some of the
truncations introduced in \eqref{equ:BSDE-N}. Indeed, for $N$ larger than
the largest of the $\sS^{\infty}$-bounds on $\YN$ and $\aN$, we have
\[ \ioN{\YiN} = \YiN \eand \ioN{\aN} = \aN.\]
Therefore, there exists a constant $N_0$ such that for $N\geq N_0$
the processes $(\YN,\aN)$ together with $(\ZN, \upN{\sigma})$ solve the intermediate system
\begin{equation}
 \label{equ:BSDE-iN}
 \tag{BSDE${}_N'$}
   \begin{aligned}
   da &= \sigma\, dB +
   \Big(
   \ba \mue  - \tot \sum_j \kappa^j q_N(Z^j)
   -
   \exp(-\ioNz{a}) \Big)\, dt,\\
   d Y^i  &= Z^i  \, dB +
 \Big(
   \tot q_N(Z^i)
 + \exp(-\ioNz{a})( \ioNz{Y^i} + \ioNz{a}  - \alpha^i e^i+1)
 \Big)\, dt \\
   \end{aligned}
\end{equation}
with the same terminal conditions as \eqref{equ:BSDE}.

\bigskip

\emph{Step 3 (Bensoussan-Frehse conditions and the existence of a Lyapunov
function).}
Mere boundedness in $\sS^{\infty} \times \bmo$ is not sufficient to
guarantee subsequential convergence of the solution $(\YN,\aN)$ of the
truncated system to a limit which solves
\eqref{equ:BSDE} or \eqref{equ:BSDE-iN}. It has been shown, however, in
\cite[Theorem 2.8, p.~501]{XinZit18},
that an
additional property - namely the existence of a uniform Lyapunov function -
will guarantee such a convergence. The existence of such a
function can be deduced from another result of the same paper,
\cite[Proposition 2.11, p.~503]{XinZit18}, once its conditions are checked.
This proposition states that a uniformly bounded sequence of solutions of a
sequence of BSDE such as \eqref{equ:BSDE-iN} admits a common Lyapunov function
if the structure of its drivers satisfies the so called
Bensoussan-Frehse conditions uniformly in $N$
(see \cite[Definition 2.10, p~502]{XinZit18} for the definition).
It applies here because our system is of upper-triangular form
when it comes to its quadratic dependence on $z$. More precisely, the
driver of the system \eqref{equ:BSDE-iN} can be represented as a sum
of two functions $\upN{f}_1$ and $\upN{f}_2$ given by
 \begin{align*}
    (\upN{f}_1)^i(t,x,y,z) &= \begin{cases}
   \ba \mue(t,x) - \exp(-\ioNz{y^{0}}) & i=0 \\
   \exp(-\ioNz{y^{0}})( \ioNz{y^i} +
   \ioNz{y^{0}}  - \alpha^i e^i(t,x)+1) \Big), &  1\leq i \leq I \\
   \end{cases}\\
(\upN{f}_2)^i(t,x,y,z) &= \begin{cases}
   -\tot \sum_{l=1}^I \kappa_l q_N(z^l) & i=0\\
   \tot q_N (z^i)   & 1\leq i \leq I
   \end{cases}
 \end{align*}
where the convention that $a=Y^{0}$ and $\sigma = Z^{0}$ is used.
 Therefore, there exists a universal constant $C$ such that, for all $1\leq
 i \leq I+1$, we have
\[ \abs{(\upN{f}_1)^i (t,x,y,z)} \leq C\]
as well as
\[\abs{(\upN{f}_2)^i(t,x,y,z)} \leq
C(1+ \sum_{j=1}^i \abs{q_N(z^j)})
\leq
C(1+ \sum_{j=1}^i \abs{z^j}^2)
.\]
Therefore, $\upN{f}$ can be split into a subquadratic (in fact bounded) and
an upper triangular component, allowing us to conclude that a uniform
Lyapunov function for $(\upN{f})_{N\geq N_0}$ can be constructed.

\bigskip

\emph{Step 4 (Passage to a limit).} It remains to use
\cite[Theorem 2.8, p.~501]{XinZit18} to conclude that a subsequence of
$\upN{v}$ converges towards a
continuous function $v:\TR\to\R^{I+1}$ such that $Y_t
= v(t,\xi_t)$ and $Z_t=Dv(t,\xi_t)$ solves the limiting system
\begin{equation}
 \label{equ:BSDE-lim}
 \tag{BSDE'}
   \begin{aligned}
   da &= \sigma\, dB +
   \Big(
   \ba \mue  - \tot \sum_l \kappa^l \abs{Z^l}^2
   -
   \exp(-\ioNz{a}) \Big)\, dt. \\
d Y^i  &= Z^i  \, dB +
 \Big(
   \tot \abs{Z^i}^2
 + \exp(-\ioNz{a})( \ioNz{Y^i} + \ioNz{a}  - \alpha^i e^i+1)
 \Big)\, dt,
   \end{aligned}
\end{equation}
for $i=\ft{1}{I}$,
where, as above $a=Y^0$ and $\sigma=Z^0$.
As far as the conditions of Theorem 2.8 in \cite{XinZit18} are concerned, the
most difficult one,
 the existence of a Lyapunov function,
has been settled in Step 3.~above. The other conditions - the uniform H\"
older boundedness of the terminal conditions, and a-priori boundedness - are
easily seen to be implied by our standing assumptions. Finally, since $Y$
is a pointwise limit of a sequence of functions bounded by $N_0$, the same
processes $(Y,Z)$ also solve the original BSDE \eqref{equ:BSDE} (without
 truncation at $N_0$).
\end{proof}

\section{Bounded solutions of Lipschitz quasilinear systems} The main result of
this section, Proposition \ref{pro:F-Lip},
collects some results on systems of
heat equations with Lipschitz nonlinearities on derivatives up to the
first order. We suspect that these results may be well-known to PDE
specialists, but we were unable to find a precise reference under the
same set of assumptions in the literature, and, therefore, decided to
include a fairly self-contained proof.

\medskip

In the sequel, $D$ denotes the derivative operator with respect to all
spatial variables, i.e., all variables except $t$.
For $d,J\in \N$ and $\beta\geq 0$, we define the following three Banach spaces:
\begin{enumerate}
  \item $\linf=\linf(\Rd, \R^J)$ or $\linf=\linf(\Rd; \R^{J\times d})$,
  depending on the context,
  \item
$W^{1,\infty} = W^{1,\infty}(\Rd; \R^{J})$, with the norm $\norm{U}_{\Wi}
= \norm{U}_{\linf} + \norm{DU}_{\linf}$.
\item $\lob = \lob( [0,T); \Wi)$ - the Banach space of
measurable
functions $u:[0,T]\to \Wi$, endowed with the exponentially weighted norm
\[ \norm{u}_{\lob} =  \int_0^T e^{-\beta (T-t)} \norm{u(t,\cdot)}_{\Wi}\, dt.\]
\end{enumerate}
 The infinitesimal generator of the state process $\xi$ is given by
 \begin{align*}
      \sA u(t,x) = Du(t,x) \Lambda(t,x) + \tot \Tr\Big(D^2 u(t,x) \, \Sigma
      (t,x)
      \Sigma^T(t,x) \Big) \efor (t,x) \in \TR.
 \end{align*}

\begin{proposition}
\label{pro:F-Lip}
Suppose that
$g:\Rd \to \R^J$ and $F:[0,T]\times \Rid \times \R^{J} \times \R^
{J\times
d} \to \R^J$ are
measurable functions such that
\begin{itemize}
  \item $\abs{g(x)} \leq M$,
  \item $\abs{F(t,x,0,0)}\leq M$, and
  \item $\abs{F(t,x,y_2,z_2) - F(t,x,y_1,z_1)} \leq M( \abs{y_2-y_1} +
  \abs{z_2-z_1})$,
\end{itemize}
for some $M$ and all $t,x,y_1,y_2, z_1, z_2$, and that
the functions $\Ld$ and $\Sigma$
satisfy the conditions of Assumption \ref{asm:xi} (with the constant $K$).
Then  the following statements hold:
\begin{enumerate}
\item The PDE system
 \begin{align}
 \label{equ:PDE-lip}
   u_t + \sA u + F(\cdot,\cdot,u,Du)=0, \ u(T,\cdot) = g
 \end{align}
admits a weak solution $u$ on $[0,T]$. Moreover $u(t,\cdot) \in
W^{1,\infty}$ for all $t\in [0,T)$ and there exists a constant
$C=C(J,d,M,T,K) \in [0,\infty)$  such that
\[ \norm{u(t,\cdot)}_{\linf} \leq C
\eforall t\in [0,T]
\eand \int_0^T \norm{D
u(t,\cdot)}_{\linf}\, dt\leq C\]
\item Let $u$ denote a solution of \eqref{equ:PDE-lip} as in (1) above and
let
$\prf{\xi_t}$ be a strong solution of the SDE
\begin{align*}
     d\xi_t = \Ld(t,\xi_t)\, dt + \Sigma(t,\xi_t)\, dB_t.
\end{align*}
The pair $(Y_t,Z_t)$, where $Y_t = u(t,\xi_t)$ and $Z_t
= Du (t,\xi_t) \Sigma(t,\xi_t)$ is an $\sS^{\infty} \times \bmo$-solution
to the system
\begin{align}
\label{equ:two}
  dY^i_t = - F^i\Big(t,\xi_t, Y_t, Z_t \Sigma^{-1}(t,\xi_t)\Big)\, dt + Z^i_t\,
  dB_t,\
  Y^i_T = g^i(\xi_T),\ i=\ft{1}{I}
\end{align}
\end{enumerate}
\end{proposition}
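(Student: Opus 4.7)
The plan for part (1) is a Banach fixed-point argument in the weighted Bochner space $\lob([0,T);\Wi)$. Given $v\in\lob$, define $\Phi(v):=u$ to be the mild solution of the linearized Cauchy problem
\begin{equation*}
u_t + \sA u + F(\cdot,\cdot,v,Dv)=0, \quad u(T,\cdot)=g,
\end{equation*}
given by the Feynman-Kac formula
\begin{equation*}
u(t,x) = \EE\Big[g(\xi^{t,x}_T) + \int_t^T F\big(s,\xi^{t,x}_s, v(s,\xi^{t,x}_s), Dv(s,\xi^{t,x}_s)\big)\, ds\Big],
\end{equation*}
where $\xi^{t,x}$ denotes the state process started at $x$ at time $t$. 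Any fixed point of $\Phi$ solves \eqref{equ:PDE-lip}.

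For the iteration to close one needs $\linf\to\linf$ and $\linf\to\Wi$ estimates for the transition semigroup $P_{t,s}$ of $\xi$, the crucial one being the spatial-gradient bound $\norm{DP_{t,s}h}_{\linf}\leq \tf{C}{\sqrt{s-t}}\norm{h}_{\linf}$. Under the uniform ellipticity, boundedness, and (Lipschitz-in-$x$, $\tot$-H\"older-in-$t$) regularity imposed by Assumption \ref{asm:xi}, such estimates are classical (e.g.~Friedman or Ladyzhenskaya-Solonnikov-Ural'tseva). Combined with the Lipschitz assumption on $F$ and the elementary bound $\int_0^s e^{-\beta r}(1+r^{-1/2})\,dr \leq C\beta^{-1/2}$ (valid for $\beta\geq 1$), a direct calculation produces
\begin{equation*}
\norm{\Phi(v_1)-\Phi(v_2)}_{\lob} \leq \tf{C}{\sqrt{\beta}}\norm{v_1-v_2}_{\lob},
\end{equation*}
together with an analogous bound on $\norm{\Phi(0)}_{\lob}$. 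Choosing $\beta$ sufficiently large makes $\Phi$ a contraction, and Banach's theorem yields a unique fixed point $u\in\lob$. Plugging $u$ back into its own Feynman-Kac representation recovers the pointwise $\linf$-bound and the $\int_0^T\norm{Du(t,\cdot)}_{\linf}\,dt\leq C$ bound claimed in the statement.

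For part (2), the plan is to apply It\^o's formula to $u(t,\xi_t)$ to obtain the BSDE \eqref{equ:two} with $Y_t=u(t,\xi_t)$ and $Z_t=Du(t,\xi_t)\Sigma(t,\xi_t)$. Since $u$ has only $\Wi$-regularity in space, a direct appeal to It\^o is not available; I would therefore approximate by mollifying $g$ and $F$ in the spatial variables to obtain smooth data $g^{\eps}$, $F^{\eps}$, solve the resulting classical $C^{1,2}$ Cauchy problems $u^{\eps}$, apply It\^o to $u^{\eps}(t,\xi_t)$, and pass to the limit $\eps\downto 0$ using the uniform $\lob$-estimates from part (1). The $\Sin$-bound on $Y$ is immediate from boundedness of $u$; the $\bmo$-property of $Z$ follows because $Y$ is a uniformly bounded semimartingale whose drift is dominated by a linear function of $\abs{Z}$, a setting in which a standard energy-type estimate yields $\int Z\,dB\in BMO$.

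The main obstacle I anticipate is producing the gradient estimate for the variable-coefficient heat semigroup under the minimal regularity of Assumption \ref{asm:xi}: this single ingredient drives the whole contraction and dictates the choice of the exponential weight. A secondary difficulty is justifying It\^o's formula at $\Wi$-regularity in space, which the mollification-and-limit scheme above is designed to bypass.
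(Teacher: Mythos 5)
Your proposal follows essentially the same route as the paper's: a Banach fixed-point argument in the weighted space $\lob$, driven by the gradient estimate $\norm{DP_{t,s}h}_{\linf}\leq C(s-t)^{-1/2}\norm{h}_{\linf}$ which the paper extracts from Friedman's Gaussian bounds on the fundamental solution and its first spatial derivatives, produces the mild solution with the stated $\Wi$-regularity. For part (2), both arguments identify the BSDE via an approximation step (the paper outsources this to \cite[Lemma 4.4]{XinZit18}, you propose mollification) and then upgrade boundedness of $Y$ to the $\bmo$-property of $Z$ by an exponential-transform estimate.
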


\begin{proof}
Throughout the proof, $C$ will denote a constant which may depend on $J,d,M,T$
or
$K$,
but not on $\beta$, $t,s$ or $x$,
and can change from line to line; we will call such a constant universal. The
assumptions on $F$
imply that uniformly in $t$,  and for all $U,V\in \Wi$, we have
 \begin{align}
 \label{equ:Lip-F}
    \norm{F(t,\cdot, U, DU) -
   F(t,\cdot, V,DV)}_{\linf} &\leq C \norm{U-V}_{W^{1,\infty}}, \eand\\
 \label{equ:bd-F}
    \norm{F(t,\cdot,U,DU)}_{\linf} &\leq C (1+\norm{U}_{\Wi}).
 \end{align}
Let $p(\tx;\sxp)$ denote a fundamental solution associated to the
operator $u_t + \sA u$, i.e., $(\tx)\mapsto p(\tx,\sxp)$ solves
\begin{align*}
     p_t + \sA p &= 0 \text{ for } \tx \in [0,s) \times \Rd
\end{align*}
classically
and satisfies the boundary condition $\lim_{t\nearrow s} \int_{\Rd} \psi(x) p
(t,x,s,x')\,dx = \psi(x')$ for each bounded and continuous $\psi$.
We refer the reader to
\cite[Theorem 10, p.~23]{Fri64} and the discussion
preceding it for existence of a positive fundamental solution under the
conditions of Assumption \ref{asm:xi}. Moreover, the
equations (6.12) and (6.13) on p.~24 of \cite{Fri64} state that
there exist universal constants $C,\ld>0$ such that  for $t<s$ and all $x,x'$
we have
\begin{align}
\label{equ:G-bounds}
\abs{p(\tx,\sxp)} \leq C \vp_{\ld}(\tx,\sxp) \eand
\abs{\partial_{x_k} p(\tx,\sxp)} \leq C \oo{\sqrt{s-t}}
\vp_{\ld} (\tx,\sxp),
\end{align}
for all $k=\ft{1}{d}$, where
\begin{align*}
     \vp_{\ld}(\tx;\sxp) = \oo{(2\pi \ld^2 (s-t))^{d/2}} \exp\Big( - \oo{2
     \ld^2 (s-t)} \abs{x'-x}^2\Big),
\end{align*} is the
scaled heat kernel (which is, itself, a
fundamental solution associated to $u_t+\tot \ld^2 \Delta u$.).

These properties, in particular, allow us to define
the function $\Phi[u]:[0,T]\times \Rd \to \R^{J}$ by
 \begin{align}
 \label{equ:def-Phi}
    \Phi[u](t,x) = \int_{\Rd} \int_t^T F\Big(s,x',
   u(s,x'), Du (s,x')\Big) p(t,x;s,x')\, ds\, dx',
 \end{align}
 for each $u \in \lob$.
The equation \eqref{equ:bd-F} guarantees that  $\Phi[u]$ is well-defined with
$\Phi[u](t,\cdot) \in \linf$.

The Gaussian bounds in \eqref{equ:G-bounds} imply that one can pass the
derivative under the
integral sign to obtain
 \begin{align}
 \label{equ:D-Phi}
    \partial_{x_k}\Phi[u](t,x) = \int_{\Rd} \int_t^T F\Big(s,x',
   u(s,x'), Du (s,x')\Big) \partial_{x_k} p(\tx;\sxp)
   \, ds\, dx',
 \end{align}

Consequently $t\mapsto\Phi[u](t,\cdot)$ is an a.e-defined measurable map $
[0,T]\to
\Wi$, for each
$u\in \lob$.
To bound the norm of $\Phi[u]$ we start with the following estimate,
fueled by \eqref{equ:G-bounds},
 \begin{align*}
   \norm{ \Phi[u](t,\cdot)}_{\Wi}
   &\leq C
    \int_t^T (1+\norm{u(s,\cdot)}_{\Wi})
     \int_{\Rd} \left(p(\tx;\sxp) + \sum_{k=1}^J \abs{\partial_{x_k} p
     (\tx;\sxp)}\right)\, dx' \, ds \\
     &  \leq C \int_t^T \oo{\sqrt{s-t}} (1+\norm{u(s,\cdot)}_{\Wi})\, ds.
 \end{align*}
Furthermore, \eqref{equ:def-Phi} and \eqref{equ:D-Phi} imply
that
 and, so,
 \begin{align*}
 \norm{\Phi[u]}_{\lob} &\leq C \int_0^T
    e^{\beta(t-T)} \int_t^T \oo{\sqrt{s-t}}
    (1+\norm{u(s,\cdot)}_{\Wi})\, ds\,
    dt\\
    &= C\int_0^T (1+\norm{u(s,\cdot)}_{\Wi})\, ds\,
    \int_0^s \oo{\sqrt{s-t}} e^{\beta (t-T)}\, dt\\
    &\leq \tf{C}{\sqrt{\beta}}(1 + \norm{u}_{\lob})
 \end{align*}
A similar computation also yields
 \begin{align}
 \label{equ:Phi-Lip}
    \norm{ \Phi[u]- \Phi[v]}_{\lob} &\leq \tf{C}{\sqrt{\beta}} \norm{u-v}_{\lob}.
 \end{align}

Next, for $g\in \linf$, we define
\[ \Psi[g](t,x) = \int_{\Rd} g(x') p(t,x; T,x')\, dx',\]
so that, as above,
\[ \norm{\Psi[g](t,\cdot)}_{\Wi} \leq \tf{C}{\sqrt{T-t}}
\norm{g}_{\linf} \text{ and }
 \norm{\Psi[g]}_{\lob} \leq \tf{C}{\sqrt{\beta}} \norm{g}_{\linf},\]
 and $\Psi[g]\in\lob$ for each $g\in \linf$. Therefore, the function
 \[ \Gamma[u] = \Phi[u] + \Psi[g],\]
 maps $\lob$ into $\lob$ and
 \eqref{equ:Phi-Lip} implies that it is Lipschitz,
 with constant $C/\sqrt{\beta}$. Since $C$ does not depend on $\beta$,
 we can turn $\Gamma$ into a contraction by
choosing a large-enough $\beta$, and conclude that
$\Gamma$ admits a unique fixed point $u \in \lob$. The integral representations in
\eqref{equ:def-Phi} and \eqref{equ:D-Phi} allow us to conclude that
  $u$ and $Du$ are continuous functions on $[0,T)\times \Rd$.
  Moreover, thanks to the Markov property of $\xi$, we have
  \[ u(t,\xi_t)  = \EE[ g(\xi_T)+
  \int_t^T f(s,\xi_s)\, ds | \sF_t], \text{
  a.s.}\]
where
 \begin{align}
 \label{equ:f-from-F}
   f(s,x) = F(s,x,u(s,x), Du(s,x)) \in \R^J.
 \end{align}
Since $\norm{f(t,\cdot)}_{\linf}\leq C(1+\norm{u(t,\cdot)}_{\Wi})$ for all $t$,
the map $t\mapsto
\norm{u(t,\cdot)}_{\Wi}$ belongs to $\lob$, and  (stripped of its norm)
the space $\lob$ does not
depend on the choice of $\beta$. Therefore,
\[ \Bnorm{u(t,\xi_t) - \EE[ g(\xi_T)|\sF_t] }_{\linf} \leq \int_t^T
\norm{f(s,\cdot)}_{\linf}\, ds \to 0 \text{ as } t\to T.\]
Since $g$ is bounded, we have $\norm{u(t,\cdot)}_{\linf}\leq
C$ for all $t$. Moreover, the martingale $\EE[ g(\xi_T)|\sF_t]$ admits a
continuous modification, so the process $Y$, defined by
\[ Y_t =
\begin{cases}
  u(t,\xi_t), & t<T\\
  g(\xi_T), & t=T,
\end{cases} \]
is a.s.-continuous.
This allows us to conclude, furthermore, that $Y_t + \int_0^t f(s,\xi_s)$
is a continuous modification of the martingale
\[ M_t = \EE[ g(\xi_T) + \int_0^T f(s,\xi_s)\, ds|\sF_t],\]
making $Y$ a semimartingale. To show that $(Y,Z)$ as in the statement indeed
solves \eqref{equ:two}, we need to argue that
the martingale $M_t-M_0$ must be of the
form $\int_0^t Du(s,\xi_s) \Sigma(s,\xi_s)\, dB_s$. This can be proven by
approximation as in the proof of \cite[Lemma 4.4, p.~516]{XinZit18}).

\bigskip

The last step is to argue that $(Y,Z)$ is an $\sS^{\infty}\times
\bmo$-solution. The function $u$ is uniformly bounded, so it suffices to
establish the $\bmo$-property of $Z$. This can be bootstrapped from the
boundedness of $Y$ by applying \itos{} formula to the bounded processes
$\exp(c Y^i)$, $i=\ft{1}{J}$, for large-enough constant $c$. A similar argument
is already presented on page \pageref{page:exp-tranform}, in the proof of
Theorem \ref{thm:exist}, so we skip the details.
\end{proof}

\def\cprime{$'$}
\providecommand{\bysame}{\leavevmode\hbox to3em{\hrulefill}\thinspace}
\providecommand{\MR}{\relax\ifhmode\unskip\space\fi MR }
\providecommand{\MRhref}[2]{%
  \href{http://www.ams.org/mathscinet-getitem?mr=#1}{#2}
}
\providecommand{\href}[2]{#2}


\begin{thebibliography}{CLM12}

\bibitem[Cal01]{C01JET}
Laurent~E. Calvet, \emph{Incomplete markets and volatility}, Journal of
  Economic Theory \textbf{98} (2001), no.~2, 295--338.

\bibitem[CL14]{CL14RAPS}
Peter~O. Christensen and Kasper Larsen, \emph{Incomplete continuous-time
  securities markets with stochastic income volatility}, Review of Asset
  Pricing Studies \textbf{4} (2014), no.~2, 247--285.

\bibitem[CL15]{CL15FS}
Jin~Hyuk Choi and Kasper Larsen, \emph{Taylor approximation of incomplete
  {R}adner equilibrium models}, Finance and Stochastics \textbf{19} (2015),
  no.~3, 653--679.

\bibitem[CLM12]{CLM12JET}
Peter~Ove Christensen, Kasper Larsen, and Claus Munk, \emph{Equilibrium in
  securities markets with heterogeneous investors and unspanned income risk},
  Journal of Economic Theory \textbf{147} (2012), no.~3, 1035--1063.

\bibitem[Coc14]{C14JF}
John~H. Cochrane, \emph{A mean-variance benchmark for intertemporal portfolio
  theory}, Journal of Finance \textbf{69} (2014), no.~1, 1--49.

\bibitem[EB13]{BriEli13}
Romuald Elie and Philippe Briand, \emph{A simple constructive approach to
  quadratic {BSDEs} with or without delay}, Stochastic Processes and their
  Applications \textbf{123} (2013), no.~8, 2921--2939.

\bibitem[Fri64]{Fri64}
Avner Friedman, \emph{Partial differential equations of parabolic type},
  Prentice-Hall, Inc., Englewood Cliffs, N.J., 1964.

\bibitem[Kaz94]{Kaz94}
Norihiko Kazamaki, \emph{Continuous exponential martingales and {BMO}}, Lecture
  Notes in Mathematics, vol. 1579, Springer-Verlag, Berlin, 1994.

\bibitem[KX{\v{Z}}15]{KXZ15wp}
Constantinos Kardaras, Hao Xing, and Gordan {\v{Z}}itkovi{\'c},
  \emph{Incomplete stochastic equilibria with exponential utilities close to
  {P}areto optimality}, Submitted for publication, 2015.

\bibitem[PP90]{ParPen90}
{\'E}.~Pardoux and S.~G. Peng, \emph{Adapted solution of a backward stochastic
  differential equation}, Systems Control Lett. \textbf{14} (1990), no.~1,
  55--61.

\bibitem[VV99]{VV99ET}
Dimitri Vayanos and Jean-Luc Vila, \emph{Equilibrium interest rate and
  liquidity premium with transaction costs}, Economic Theory \textbf{13}
  (1999), 509--539.

\bibitem[Wan04]{W04JME}
Neng Wang, \emph{Precautionary saving and partially observed income}, Journal
  of Monetary Economics \textbf{51} (2004), no.~8, 1645--1681.

\bibitem[Wan06]{W06JME}
\bysame, \emph{Generalizing the permanent-income hypothesis: Revisiting
  {F}riedman's conjecture on consumption}, Journal of Monetary Economics
  \textbf{53} (2006), no.~4, 737--752.

\bibitem[Wes18]{W18MAFE}
Kim Weston, \emph{Existence of a {R}adner equilibrium in a model with
  transaction costs}, Mathematics and Financial Economics \textbf{12} (2018),
  no.~4, 517--539.

\bibitem[X{\v{Z}}18]{XinZit18}
Hao Xing and Gordan {\v{Z}}itkovi{\'c}, \emph{A class of globally solvable
  {M}arkovian quadratic {BSDE} systems and applications}, Ann.~Probab
  \textbf{46} (2018), no.~1, 491--550.

\bibitem[Zha12]{Z12phd}
Yingwu Zhao, \emph{Stochastic equilibria in a general class of incomplete
  {B}rownian market environments}, Ph.D. thesis, University of Texas at Austin,
  2012.

\bibitem[Zha17]{Zha17}
Jianfeng Zhang, \emph{Backward stochastic differential equations}, Probability
  Theory and Stochastic Modeling, vol.~84, Springer, 2017.

\bibitem[{\v{Z}}it12]{Z12FS}
Gordan {\v{Z}}itkovi{\'{c}}, \emph{An example of a stochastic equilibrium with
  incomplete markets}, Finance and Stochastics \textbf{16} (2012), no.~2,
  177--206.

\end{thebibliography}
\end{document}